\providecommand{\U}[1]{\protect\rule{.1in}{.1in}}
\newtheorem{theorem}{Theorem}
\newtheorem{acknowledgement}[theorem]{Acknowledgement}
\newtheorem{corollary}[theorem]{Corollary}
\newenvironment{proof}[1][Proof]{\noindent\textbf{#1.} }{\ \rule{0.5em}{0.5em}}
\begin{document}

\title{A Symplectic Interpretation of the Separability of Gaussian Mixed States}
\author{Maurice A. de Gosson\thanks{maurice.de.gosson@univie.ac.at}\\University of Vienna\\Faculty of Mathematics (NuHAG)\\Vienna, AUSTRIA}
\maketitle

\begin{abstract}
Using symplectic methods and the Wigner formalism we prove a refinement of a
criterion due to Werner and Wolf for the separability of bipartite Gaussian
mixed states in an arbitrary number of dimensions. We use our result to show
that one can characterize separability by comparing these states with
separable pure Gaussian states.

\end{abstract}

Let $\widehat{\rho}$ be the density matrix of a Gaussian bipartite mixed
quantum state on Hilbert space $L^{2}(\mathbb{R}^{n})=L^{2}(\mathbb{R}^{n_{A}%
})\otimes L^{2}(\mathbb{R}^{n_{B}})$; we write $z=z_{A}\oplus z_{B}$ where
$z_{A}=(x_{A},p_{A})$ and $z_{B}=(x_{B},p_{B})$. We write $J_{AB}=J_{A}\oplus
J_{B}$ where $J_{A}$ (\textit{resp.} $J_{B}$) is the standard symplectic
matrix on $\mathbb{R}^{2n_{A}}$ (\textit{resp}. $\mathbb{R}^{2n_{B}}$). Let
\begin{equation}
\rho(z)=\left(  \tfrac{1}{2\pi}\right)  ^{n}(\det\Sigma)^{-1/2}e^{-\frac{1}%
{2}\Sigma^{-1}z\cdot z}\label{Gauss1}%
\end{equation}
be the Wigner distribution of $\widehat{\rho}$; the covariance matrix $\Sigma$
satisfies the quantum condition
\begin{equation}
\Sigma+\frac{i\hbar}{2}J_{AB}\geq0.\label{quantum1}%
\end{equation}
Elaborating on previous work \cite{HHH1,HHH2,Peres,Simon} Werner and Wolf
\cite{ww1} have shown that a Gaussian state with Wigner distribution
(\ref{Gauss1}) is separable if and only there exist positive symmetric
matrices $\Sigma_{A}$ and $\Sigma_{B}$ such that
\begin{equation}
\Sigma\geq\Sigma_{.A}\oplus\Sigma_{B}\text{ \ , }\Sigma_{A}+\frac{i\hbar}%
{2}J_{A}\geq0\text{ \ , }\Sigma_{B}+\frac{i\hbar}{2}J_{B}\geq0.\label{ww}%
\end{equation}
Lami \textit{et al.} \cite{lami} have recently shown that (\ref{ww}) is
equivalent to the condition%
\begin{equation}
\Sigma\geq\Sigma_{A}\oplus\frac{i\hbar}{2}J_{B}.\label{lami}%
\end{equation}

We have the following refinement of (\ref{ww}):

\begin{theorem}
\label{Prop1}The Gaussian bipartite state $\widehat{\rho}$ is separable if and
only if there exist positive definite symplectic matrices $P_{A}$ and $P_{B}$
such that
\begin{equation}
\Sigma\geq\frac{\hbar}{2}\left[  P_{A}\oplus P_{B}\right]  .\label{MC}%
\end{equation}

\end{theorem}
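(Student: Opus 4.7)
The strategy is to translate the Werner--Wolf criterion (\ref{ww}) into symplectic language via Williamson's normal form. Recall that any symmetric positive definite matrix $M$ on $\mathbb{R}^{2k}$ can be factored as $M=S^{T}DS$ with $S\in\mathrm{Sp}(2k)$ and $D=\mathrm{diag}(\lambda_{1},\ldots,\lambda_{k},\lambda_{1},\ldots,\lambda_{k})$ the diagonal of its symplectic eigenvalues, and the quantum condition $M+\tfrac{i\hbar}{2}J\geq 0$ is equivalent to $\lambda_{j}\geq\hbar/2$ for every $j$. Both directions of the theorem will rest on this dictionary, applied to the $A$ and $B$ blocks separately.

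For the ``only if'' direction, I would start from (\ref{ww}) and apply Williamson to $\Sigma_{A}=S_{A}^{T}D_{A}S_{A}$: the quantum condition $\Sigma_{A}+\tfrac{i\hbar}{2}J_{A}\geq 0$ forces $D_{A}\geq\tfrac{\hbar}{2}I$, hence $\Sigma_{A}\geq\tfrac{\hbar}{2}S_{A}^{T}S_{A}$, and likewise for $B$. Setting $P_{A}=S_{A}^{T}S_{A}$ and $P_{B}=S_{B}^{T}S_{B}$ produces positive definite matrices that are \emph{symplectic}, since the transpose of a symplectic matrix is symplectic and products of symplectic matrices are symplectic. Taking direct sums yields (\ref{MC}).

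For the converse, I would set $\Sigma_{A}=\tfrac{\hbar}{2}P_{A}$ and $\Sigma_{B}=\tfrac{\hbar}{2}P_{B}$, so that $\Sigma\geq\Sigma_{A}\oplus\Sigma_{B}$ by hypothesis, and then verify that each $\Sigma_{A}$, $\Sigma_{B}$ satisfies its quantum condition. The crux is the observation that any positive definite symplectic matrix $P$ has all symplectic eigenvalues equal to $1$: combining $P^{T}JP=J$ with $P=P^{T}$ one obtains $(JP)^{2}=-I$, so the eigenvalues of $JP$ are $\pm i$, and their moduli---the symplectic eigenvalues of $P$---are all $1$. Williamson's characterisation of the quantum condition then gives $\tfrac{\hbar}{2}P_{A}+\tfrac{i\hbar}{2}J_{A}\geq 0$, and Werner--Wolf (\ref{ww}) concludes.

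The only non-mechanical step is recognising that positive definite symplectic matrices sit exactly on the boundary of the quantum admissibility cone, which is what makes the symplectic refinement (\ref{MC}) both sharp and well-posed; everything else is routine Williamson bookkeeping, requiring only that one respect the block structure $J_{AB}=J_{A}\oplus J_{B}$ throughout.
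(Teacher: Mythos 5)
Your proof is correct, and it reaches the Werner--Wolf reduction by a genuinely different (and more self-contained) route than the paper. For necessity, the paper argues via symplectic topology: the condition $\Sigma_A+\frac{i\hbar}{2}J_A\geq 0$ forces the covariance ellipsoid to have symplectic capacity at least $\pi\hbar$, hence to contain a quantum blob $S_A^{-1}B^{2n_A}(\sqrt{\hbar})$, and this containment is then translated into $\Sigma_A\geq\frac{\hbar}{2}(S_A^{T}S_A)^{-1}$. You obtain the same inequality (with $P_A=S_A^{T}S_A$ in place of its inverse --- an immaterial relabelling, since the inverse of a positive definite symplectic matrix is again one) directly from Williamson's diagonalization $\Sigma_A=S_A^{T}D_AS_A$ together with the standard equivalence between the quantum condition and $D_A\geq\frac{\hbar}{2}I$. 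For ellipsoids these two facts coincide (the linear symplectic capacity is read off the smallest symplectic eigenvalue), but your version is pure linear algebra and needs no appeal to capacity results. For sufficiency, the paper checks the quantum condition for $\frac{\hbar}{2}(S^{T}S)^{-1}$ by conjugating it to $I+iJ\geq 0$; your identity $(JP)^{2}=-I$ for $P$ symmetric positive definite and symplectic, showing that all symplectic eigenvalues of such a $P$ equal $1$, is an equivalent but arguably more illuminating verification, since it exhibits the matrices $\frac{\hbar}{2}P$ as exactly the extremal admissible covariance matrices. Both arguments take the Werner--Wolf criterion (\ref{ww}) as the sole external input and treat the $A$ and $B$ blocks identically, so the logical skeleton is the same; the difference lies entirely in which standard lemma carries the weight of the necessity direction.
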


\begin{proof}
Define $S_{A}\in\operatorname*{Sp}(n_{A})$ and $S_{B}\in\operatorname*{Sp}%
(n_{B})$ by $P_{A}=(S_{A}^{T}S_{A})^{-1}$ and $P_{B}=(S_{A}^{T}S_{B})^{-1}$.
Let us show that the condition
\[
\Sigma\geq\frac{\hbar}{2}\left[  (S_{A}^{T}S_{A})^{-1}\oplus(S_{B}^{T}%
S_{B})^{-1}\right]
\]
is sufficient. The matrices $\Sigma_{A}=\frac{\hbar}{2}(S_{A}^{T}S_{A})^{-1}%
$and $\Sigma_{B}=\frac{\hbar}{2}(S_{B}^{T}S_{B})^{-1}$ are quantum covariance
matrices: for instance the condition $\frac{\hbar}{2}(S_{A}^{T}S_{A}%
)^{-1}+\frac{i\hbar}{2}J_{A}\geq0$ is equivalent to $I_{A}+iJ_{A}\geq0$ which
is trivially satisfied. It follows from (\ref{ww}) that $\widehat{\rho}$ is
separable. The inequality (\ref{MC}) is also necessary: the condition
$\Sigma_{A}+\frac{i\hbar}{2}J_{A}\geq0$ in (\ref{ww}) implies that the
symplectic capacity of the the covariance ellipsoid $\Omega_{A}=\{z_{A}%
:\frac{1}{2}\Sigma_{A}z_{A}\cdot z_{A}\leq1\}$ is at least $\pi\hbar$; it
follows \cite{Birk,FOOP,physreps} that $\Omega_{A}$ contains a quantum blob
\cite{blobs}, that is, there exists $S_{A}\in\operatorname*{Sp}(n_{A})$ such
that $S_{A}^{-1}B^{2n_{A}}(\sqrt{\hbar})\subset\Omega_{A}$ where $B^{2n_{A}%
}(\sqrt{\hbar})$ is the centered ball in $\mathbb{R}^{2n_{A}}$ with radius
$\sqrt{\hbar}$. We must thus have $\Sigma_{A}\geq\frac{\hbar}{2}(S_{A}%
^{T}S_{A})^{-1}$. Similarly there exists $S_{B}\in\operatorname*{Sp}(n_{A})$
such that $\Sigma_{B}\geq\frac{\hbar}{2}(S_{B}^{T}S_{B})^{-1}$ and hence the
inequality (\ref{MC}) must hold. 
\end{proof}

The result above is an improvement of the Werner--Wolf result: in order to
determine whether a Gaussian state is separable the latter require that one
determines real symmetric matrices $\Sigma_{A}$ and $\Sigma_{B}$ satisfying
the inequalities (\ref{ww}) while we only require that one finds two positive
definite symplectic matrices $P_{A}$ and $P_{B}$ satisfying (\ref{MC}).  

Suppose we have equality in (\ref{MC}). Then the state $\widehat{\rho}$ is a
pure state, in fact the tensor product $\widehat{S}_{A}^{-1}\phi_{A}%
\otimes\widehat{S}_{B}^{-1}\phi_{B}$ where
\begin{align*}
\phi_{A}(x_{A})  &  =(\pi\hbar)^{-n_{A}/4}e^{-|x_{A}|^{2}/2\hbar}\\
\phi_{B}(x_{B})  &  =(\pi\hbar)^{-n_{B}/4}e^{-|x_{B}|^{2}/2\hbar}%
\end{align*}
are the standard coherent states on $\mathbb{R}^{n_{A}}$ and $\mathbb{R}%
^{n_{B}}$, and $\widehat{S}_{A}\in\operatorname*{Mp}(n_{A})$ (\textit{resp}.
$\widehat{S}_{B}\in\operatorname*{Mp}(n_{B})$) is anyone of the two
metaplectic operators covering\ $S_{A}$ (\textit{resp.} $S_{B}$). In fact, the
Wigner distribution (\ref{Gauss1}) becomes in this case
\begin{align*}
\rho(z)  &  =(\pi\hbar)^{-n}e^{-\frac{1}{\hbar}(S_{A}^{T}S_{A}z_{A}\cdot
z_{A}+S_{B}^{T}S_{B}z_{B}\cdot z_{B})}\\
&  =W_{A}\phi_{A}(S_{A}z_{A})W_{B}\phi_{B}(S_{B}z_{B})
\end{align*}
where $W_{A}\phi_{A}$ is the Wigner transform of and $W_{B}\phi_{B}$ that of
$\phi_{B}$ (similarly defined). It follows from the symplectic covariance
property \cite{Wigner} of the Wigner transform that
\[
W_{A}\phi_{A}(S_{A}\cdot)=W_{A}(\widehat{S}_{A}^{-1}\phi_{A})\text{ \ ,
\ }W_{B}\phi_{B}(S_{B}\cdot)=W_{A}(\widehat{S}_{B}^{-1}\phi_{B})
\]
hence $\rho(z)$ is the Wigner transform of $\widehat{S}_{A}^{-1}\phi
_{A}\otimes\widehat{S}_{B}^{-1}\phi_{B}$. The converse of this property is
trivial. Notice that the states $\widehat{S}_{A}^{-1}\phi_{A}$ and
$\widehat{S}_{B}^{-1}\phi_{B}$ are easily calculated \cite{Birk,Wigner}: they
are explicitly given by
\begin{align*}
\widehat{S}_{A}^{-1}\phi_{A}(x_{A})  &  =(\pi\hbar)^{-n_{A}/4}(\det
X_{A})^{1/4}e^{-\frac{1}{2\hbar}(X_{A}+iY_{A})x_{A}\cdot x_{A}}\\
\widehat{S}_{B}^{-1}\phi_{B}(x_{B})  &  =(\pi\hbar)^{-n_{B}/4}(\det
X_{B})^{1/4}e^{-\frac{1}{2\hbar}(X_{B}+iY_{B})x_{B}\cdot x_{B}}%
\end{align*}
where the real symmetric matrices $X_{A}>0$, $X_{B}>0$ and $Y_{A},Y_{B}$ are
obtained by solving the identities%
\begin{align*}
S_{A}^{T}S_{A}  &  =%
\begin{pmatrix}
X_{A}+Y_{A}X_{A}^{-1}Y_{A} & Y_{A}X_{A}^{-1}\\
X_{A}^{-1}Y_{A} & X_{A}^{-1}%
\end{pmatrix}
\\
S_{B}^{T}S_{B}  &  =%
\begin{pmatrix}
X_{B}+Y_{B}X_{B}^{-1}Y_{B} & Y_{B}X_{B}^{-1}\\
X_{B}^{-1}Y_{B} & X_{B}^{-1}%
\end{pmatrix}
.
\end{align*}

More generally Proposition \ref{Prop1} implies that $\widehat{\rho}$ is
separable if and only if it dominates a tensor product of two pure Gaussian states:

\begin{corollary}
The Gaussian bipartite state $\widehat{\rho}$ is separable if and only if
there exist $S_{A}\in\operatorname*{Sp}(n_{A})$ and $S_{B}\in
\operatorname*{Sp}(n_{B})$ such that $S_{AB}=S_{A}\oplus S_{B}$ satisfies
\begin{equation}
\rho(S_{AB}^{-1}z)\geq\mu(\widehat{\rho})W_{A}\phi_{A}(z_{A})W_{B}\phi
_{B}(z_{B}). \label{corab2}%
\end{equation}
where $\phi_{A}$ (resp. $\phi_{B}$) is the standard Gaussian and
\[
\mu(\widehat{\rho})=\left(  \frac{\hbar}{2}\right)  ^{n}(\det\Sigma)^{-1/2}%
\]
is the purity of $\widehat{\rho}$.
\end{corollary}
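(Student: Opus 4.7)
The plan is to reduce the functional inequality (\ref{corab2}) to the matrix inequality (\ref{MC}) of Theorem \ref{Prop1} by writing both sides as explicit Gaussians and comparing prefactors with quadratic exponents.

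First I would substitute the explicit forms. The left-hand side is $\rho(S_{AB}^{-1}z) = (2\pi)^{-n}(\det\Sigma)^{-1/2}\exp\!\big(-\tfrac{1}{2}(S_{AB}^{-T}\Sigma^{-1}S_{AB}^{-1})z\cdot z\big)$, while the product of standard coherent-state Wigner transforms is $W_A\phi_A(z_A)W_B\phi_B(z_B) = (\pi\hbar)^{-n}\exp(-|z|^2/\hbar)$. The prefactor on the right of (\ref{corab2}) then becomes $\mu(\widehat\rho)(\pi\hbar)^{-n} = (\hbar/2)^n(\pi\hbar)^{-n}(\det\Sigma)^{-1/2} = (2\pi)^{-n}(\det\Sigma)^{-1/2}$, matching the left-hand prefactor exactly. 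Hence (\ref{corab2}) is equivalent to the pointwise comparison of the two Gaussian exponentials, which by monotonicity of $\log$ is the matrix inequality $\tfrac{1}{2}S_{AB}^{-T}\Sigma^{-1}S_{AB}^{-1} \leq \tfrac{1}{\hbar}I$, i.e.\ $\Sigma \geq \tfrac{\hbar}{2}(S_{AB}^T S_{AB})^{-1} = \tfrac{\hbar}{2}\big[(S_A^T S_A)^{-1}\oplus(S_B^T S_B)^{-1}\big]$.

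For the sufficiency direction I would set $P_A = (S_A^T S_A)^{-1}$ and $P_B = (S_B^T S_B)^{-1}$; these are symmetric positive definite and symplectic (since $\operatorname*{Sp}$ is closed under transpose, product and inverse), so Theorem \ref{Prop1} immediately yields separability. For necessity, given symplectic positive definite $P_A, P_B$ produced by Theorem \ref{Prop1}, I would invoke the fact that every symmetric positive definite symplectic matrix admits a symplectic square root (the map $P\mapsto P^{1/2}$ stays in $\operatorname*{Sp}(n)$ by the functional calculus, $\operatorname*{Sp}(n)$ being a closed Lie subgroup stable under positive roots). Setting $S_A = P_A^{-1/2}$ and $S_B = P_B^{-1/2}$ then recovers $P_A = (S_A^T S_A)^{-1}$ and $P_B = (S_B^T S_B)^{-1}$, and reversing the equivalences above delivers (\ref{corab2}).

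The main technical point is the matching of the scalar prefactors, which is really what dictates the specific definition of the constant $\mu(\widehat\rho)$ in the statement; once that is verified, the inequality between the Gaussian exponents is automatic. The remaining subtlety is the parametrization of positive definite symplectic matrices as $(S^T S)^{-1}$ with $S\in\operatorname*{Sp}(n)$, which I would treat as standard --- it is essentially the same symplectic polar-decomposition ingredient already used in the quantum blob argument of the proof of Theorem \ref{Prop1}.
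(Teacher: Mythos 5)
Your proposal is correct and follows essentially the same route as the paper: both reduce the pointwise Gaussian inequality (\ref{corab2}) to the matrix inequality (\ref{MC}) by matching the scalar prefactors (which is exactly what fixes $\mu(\widehat{\rho})$) and comparing the quadratic forms in the exponents, and then invoke Theorem \ref{Prop1}. Your explicit justification that every positive definite symplectic matrix is of the form $(S^{T}S)^{-1}$ with $S\in\operatorname*{Sp}(n)$, via the symplectic square root, is a welcome detail that the paper leaves implicit.
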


\begin{proof}
In view of Proposition \ref{Prop1} $\widehat{\rho}$ is separable if and only
if condition (\ref{MC})
\begin{equation}
\Sigma\geq\frac{\hbar}{2}\left[  (S_{A}^{T}S_{A})^{-1}\oplus(S_{B}^{T}%
S_{B})^{-1}\right]
\end{equation}
holds for some $S_{A}\in\operatorname*{Sp}(n_{A})$ and $S_{B}\in
\operatorname*{Sp}(n_{B})$. Suppose it is the case; by definition
(\ref{Gauss1}) of $\rho$ we then have
\[
\rho(z)\geq(2\pi)^{-n}(\det\Sigma)^{-1/2\infty}e^{-\frac{1}{\hbar}S_{A}%
^{T}S_{A}z_{A}\cdot z_{A}}e^{-\frac{1}{\hbar}S_{B}^{T}S_{B}z_{B}\cdot z_{B}%
)}.
\]
We have \cite{Birk,Wigner}%
\begin{align*}
W_{A}\phi_{A}(S_{A}^{-1}z_{A})  &  =(\pi\hbar)^{-n_{A}}e^{-\frac{1}{\hbar
}|z_{A}|^{2}}\\
W_{B}\phi_{B}(S_{B}^{-1}z_{B})  &  =(\pi\hbar)^{-n_{B}}e^{-\frac{1}{\hbar
}|z_{B}|^{2}}%
\end{align*}
and hence%
\begin{equation}
\rho(z)\geq\left(  \frac{\hbar}{2}\right)  ^{n}(\det\Sigma)^{-1/2}W_{A}%
\phi_{A}(S_{A}^{-1}z_{A})W_{B}\phi_{B}(S_{B}^{-1}z_{B}) \label{corab3}%
\end{equation}
which shows that (\ref{corab2}) must hold if the state $\widehat{\rho}$ is
separable. Suppose conversely that this inequality holds. Then we must have
\[
e^{-\frac{1}{2}\Sigma^{-1}z\cdot z}\leq e^{-\frac{1}{\hbar}S_{A}^{T}S_{A}%
z_{A}\cdot z_{A}}e^{-\frac{1}{\hbar}S_{B}^{T}S_{B}z_{B}\cdot z_{B})}%
\]
which is equivalent to our condition (\ref{MC}).
\end{proof}

\begin{acknowledgement}
This work has been financed by the Austrian Research Foundation FWF (Grant
number P27773).
\end{acknowledgement}

\begin{acknowledgement}
I wish to express my gratitude to Nuno Dias and J. Prata (Lisbon) for having
pointed out an erroneous statement in the Corollary to Theorem \ref{Prop1}. 
\end{acknowledgement}

\end{document}